\g@addto@macro{\UrlBreaks}{\UrlOrds}
\newtheorem{prop}{Proposition}
\newbox\tablebox    \newdimen\tablewidth
\def\leaderfil{\leaders\hbox to 100pt{\hss.\hss}\hfil}
\def\tablenote#1 #2\par{\begingroup \parindent=0.8em
    \abovedisplayshortskip=0pt\belowdisplayshortskip=0pt
    \noindent
    $$\hss\vbox{\hsize\tablewidth \hangindent=\parindent \hangafter=1 \noindent
    \hbox to \parindent{\sup{\rm #1}\hss}\strut#2\strut\par}\hss$$
    \endgroup}
\def\doubleline{\vskip 3pt\hrule \vskip 1.5pt \hrule \vskip 5pt}
\begin{document}
\title{Combinatorial Optimization by Decomposition on Hybrid CPU--non-CPU Solver Architectures}

\author{
    \IEEEauthorblockN{Ali Narimani\IEEEauthorrefmark{1}, Seyed Saeed Changiz Rezaei\IEEEauthorrefmark{2}, Arman Zaribafiyan\IEEEauthorrefmark{3}} \\
    \IEEEauthorblockA{1QB Information Technologies (1QBit), 458-550 Burrard Street, Vancouver, British Columbia, V6C 2B5, Canada \\
    Email:  \IEEEauthorrefmark{1}   ali.narimani@1qbit.com, 
    \IEEEauthorrefmark{2}saeed.rezaei@1qbit.com, 
    \IEEEauthorrefmark{3}arman.zaribafiyan@1qbit.com}
}
              
%

\date{\today}
\IEEEtitleabstractindextext{%
\begin{abstract}

The advent of new special-purpose hardware such as FPGA or ASIC-based annealers and quantum processors has shown potential in solving certain families of complex combinatorial optimization problems more efficiently than conventional CPUs. We show that to address an industrial optimization problem, a hybrid architecture of CPUs and non-CPU devices is inevitable. In this paper, we propose problem decomposition as an effective method for designing a hybrid CPU--non-CPU optimization solver. We introduce the required algorithmic elements for making problem decomposition a viable approach in meeting the real-world constraints such as communication time and the potential higher cost of using non-CPU hardware. We then turn to the well-known maximum clique problem, and propose a new method of decomposition for this problem. Our method enables us to solve the maximum clique problem on very large graphs using non-CPU hardware that is considerably smaller than the size of the graph. As an example, we show that the maximum clique problem on the com-Amazon graph, with 334,863 vertices and 925,872 edges, can be solved with a single call to a device that can embed a fully connected graph of size at least 21 nodes, such as the D-Wave 2000Q. We also show that our proposed problem decomposition approach can improve the runtime of two of the best-known classical algorithms for large, sparse graphs, namely PMC and BBMCSP, by orders of magnitude. In the light of our study, we believe that new non-CPU hardware that is small in size could become competitive with CPUs if it could be either mass produced and highly parallelized, or able to provide high-quality solutions to specific, small-sized problems significantly faster than CPUs.

\end{abstract}

\begin{IEEEkeywords}
Problem decomposition, Combinatorial optimization, Hybrid architecture, Digital annealer, Quantum computing
\end{IEEEkeywords}}

\maketitle

\section{Introduction}

Discrete optimization problems lie at the heart of many studies in operations research and computer science (\cite{blazewicz2013scheduling,kouvelis2013robust}), as well as 
a diverse range of problems in various industries. Crew scheduling problem \cite{kasirzadeh2017airline},  vehicle routing \cite{UPS}, 
anomaly detection~\cite{NASA}, optimal trading trajectory \cite{usOTT}, job shop scheduling \cite{usJSP}, 
prime number factorization \cite{usPrimeFac}, molecular similarity \cite{usGS}, and the kidney exchange problem \cite{kidney} are all examples of discrete optimization problems encountered in real-world applications. Finding an optimum or near-optimum solution for these 
problems leads not only to more efficient outcomes, but also to saving lives, building greener industries, and developing procedures that can lead to increased work satisfaction. 

In spite of the diverse applications and profound impact the solutions to these problems can have, a large class of these problems remain intractable for conventional computers. This 
intractability stems from the large space of possible solutions, and the high computational cost for reducing this space \cite{NoC}. These 
characteristics have led to extensive research on the design and development of both exact and heuristic algorithms that exploit the structure of the
specific problem at hand to either solve these 
problems to optimality, or find high-quality solutions in a reasonable amount of time (e.g., see \cite{san2016new}, \cite{reviewjava}, and \cite{lewis2015guide}).

Alongside research in algorithm design and optimized software, building quantum computers that work based on a new paradigm of computation, such as
D-Wave Systems' quantum annealer \cite{Dwave}, or specialized classical hardware for optimization problems, such as Fujitsu's
digital annealer \cite{fujitsu}, has been a highly active field of research in recent years. 
All of the problems described above can potentially be solved with these devices after the problem has been transformed into a quadratic unconstrained binary optimization (QUBO) problem (see Ref.~\cite{ising}), and these quantum and digital annealers serve as good examples of what we refer to as ``non-CPU" hardware in this paper. 

The arrival of new, specialized hardware calls for new approaches to solving  optimization problems, many of which simultaneously harness the power 
of conventional CPUs and emerging new technologies. 
In one such approach, CPUs are used for pre- and post-processing steps, while solving the problem is  left entirely to the
non-CPU device. The CPUs then handle tasks such as converting the problems into an acceptable format, or 
analyzing the results received from the non-CPU device, without taking an active part in solving the problem.

In this paper, we focus on a different approach that is based on problem decomposition. In this approach, the original problem is decomposed into smaller-sized problems, extending the  
scope of the hardware to larger-sized problems. However, the practical use of problem decomposition depends on a multitude of 
factors. We lay out the foundations of using problem decomposition
in a hybrid CPU/non-CPU architecture in Sec.~\ref{Sec:ProbDecom}, and explain some critical characteristics that are essential for a practical problem
decomposition method within such an architecture. We then focus on a specific NP-hard problem, namely the maximum clique problem, provide and explain the formal
definition of the problem in Sec.~\ref{Sec:MaxCliqueDef}, and propose a new 
problem decomposition method for this problem in Sec.~\ref{Sec:ProbDecomMaxClique}. Sec.~\ref{Sec:ResDis} showcases the potential of our 
approach in extending the applicability of new devices to large and challenging problems, and Sec.~\ref{Sec:Dis} summarizes our results 
and presents directions for future study.

\section{Using a Hybrid Architecture for Hard Optimization Problems}
\label{Sec:ProbDecom}

\begin{figure*}[h]
	\centering
		\includegraphics[width=\textwidth]{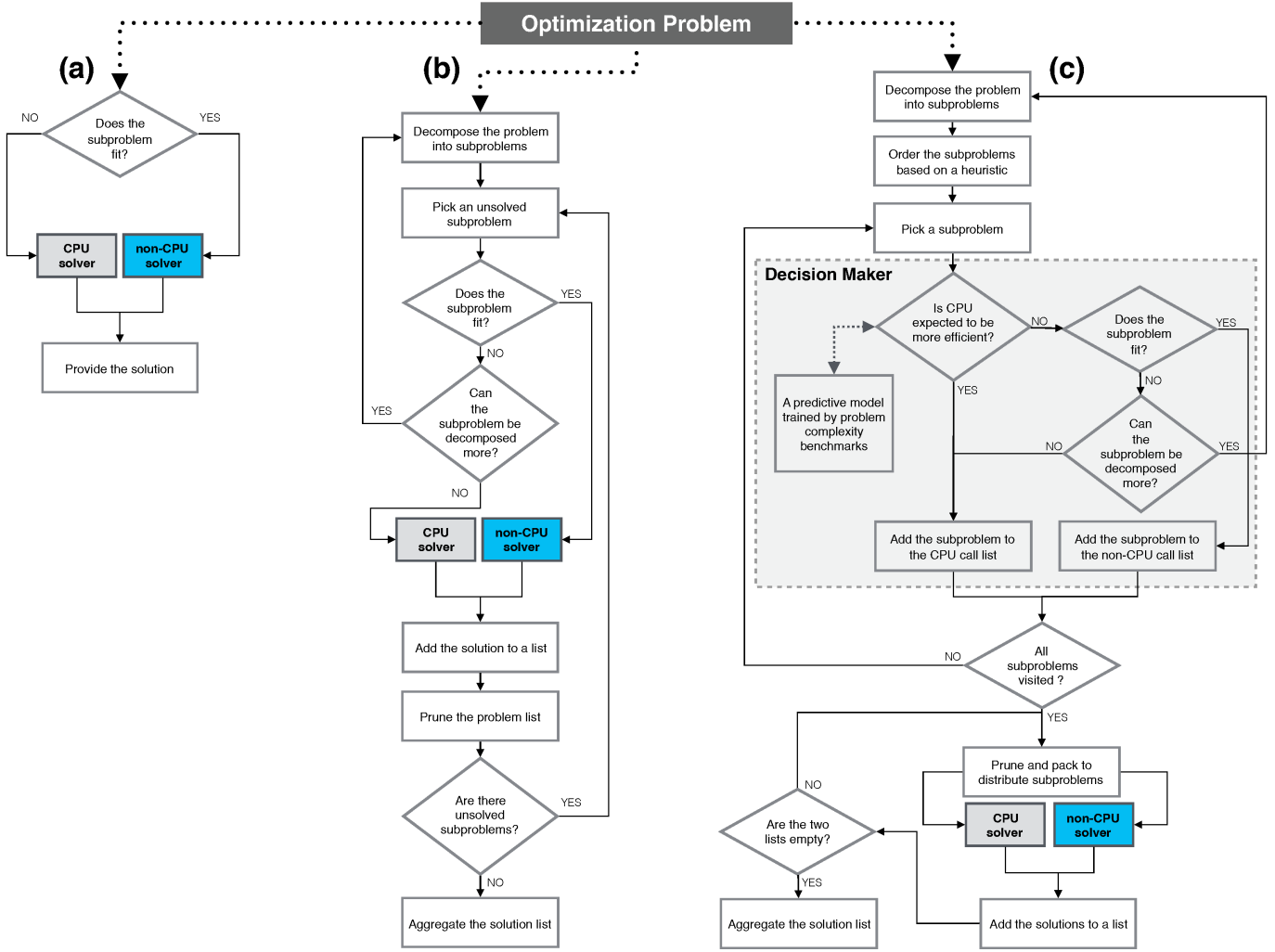}
	\caption{Three different approaches towards a hybrid CPU--non-CPU architecture for solving combinatorial optimization problems. 
	Each approach is illustrated in a separate flowchart.}
	\label{fig:flowcharts}
\end{figure*}

As new hardware is designed and built for solving optimization problems, one key question is how to optimally distribute the tasks between a conventional CPU and this new hardware (see, e.g.,~\cite{NASAdecomp}, 
\cite{dWaveDecomp}). These new hardware devices are designed and tuned to address a specific problem efficiently. However, 
the process of solving an optimization problem involves some pre- and post-processing that might not be possible 
to perform on the application-specific non-CPU hardware. The pre-processing steps include the process of reading the input problem, which is quite 
likely available in a format that is most easily read by classical CPUs, as well as embedding that problem 
into the hardware architecture of the non-CPU device. Therefore, the use of a hybrid architecture that combines CPU and non-CPU 
resources is inevitable. The simplest hybrid methods also use a low-complexity, classical,  local 
search algorithm to further optimize the results of the non-CPU device as a post-processing step. 

This simple picture was used in the early days of non-CPU solver development. However, not all problems are well-suited for a non-CPU device. Furthermore, when large optimization problems are decomposed into smaller subproblems, 
each of the subproblems might exhibit different complexity characteristics. This means that in any given problem, there might be subproblems that 
are better handled by CPU-based algorithms. This argument, together with the fact that usually a single call to a non-CPU device
will cost more than using a CPU, emphasizes the importance of identifying the best use of each device for each problem.
Thus, the CPU should also be responsible for identifying which pieces of the problem are 
best suited for which solver. 

Fig. \ref{fig:flowcharts} illustrates three different hybrid approaches to using a CPU-based and a non-CPU-based solver to solve
an optimization problem. Flowchart (a) represents the simplest hybrid method, which has the lowest level of sophistication in 
distributing tasks between the two hardware devices. It solves the problem at hand using the non-CPU solver only if the size of the problem is less than or equal to the size of the solver.
In this approach, all problems are meant to be solved 
using the non-CPU device unless they do not fit on the hardware for some technical reason. The CPU's function is to carry the 
pre- and post-processing tasks as well as to solve the problems that do not fit on the non-CPU hardware.
Flowchart (b) adds a level of sophistication in that it involves decomposing every subproblem until it either fits the non-CPU hardware, or proves to be difficult to decompose further, in which case it uses a CPU to solve the problem. 
Finally, the method we propose is depicted in (c). It is a hybrid system that uses the idea of problem decomposition in (b), but augments it with
a decision maker and a method that assigns optimization bounds to each subproblem. These additional steps are necessary for the 
practical use of decomposition techniques in hybrid architectures.

However, as we will demonstrate, not every method of decomposition will be beneficial in a hybrid CPU/non-CPU architecture. 
For this method to work best in such a scenario, we propose the following  
requirements:
\begin{itemize}
  \item the number of generated subproblems should remain a polynomial function of the input;
  \item the CPU time for finding subproblems should scale polynomially with the input size.
\end{itemize}

Given these two conditions, the total time spent on solving a problem will remain tractable if the new hardware is capable of efficiently solving problems of a specific type. More precisely, the total computation time in a hybrid architecture can be broken into three components:
\begin{equation}
T_{\text {total}} = t_{\text {CPU}} + t_{\text {comm}} + t_{\text {non-CPU}}.
\end{equation}
Here, $t_{\text {CPU}}$ is the total time spent using the CPU. It consists of decomposing the original problem, solving a fraction of the subproblems that are
not well-suited for the non-CPU hardware, and converting the remaining subproblems into 
an acceptable input format for the new hardware (e.g.,~a QUBO formulation for a device like the D-Wave 2000Q or Fujitsu's digital annealer).  The amount of time devoted to the communication between a CPU and the new hardware is denoted by $t_{\text {comm}}$. This time is proportional to the number of calls made from the CPU to the hardware (which, in itself, is less than
the total number of subproblems, as we will explain shortly).
Furthermore, $t_{\text {non-CPU}}$ is the total time that it takes for the non-CPU hardware to solve all of the subproblems that it receives. 

Given the two requirements for problem decomposition, $ t_{\text {CPU}} + t_{\text {comm}}$ remains polynomial, and using the hybrid architecture
will be justified if the non-CPU hardware is capable of solving the assigned problems significantly more efficiently than a CPU. 


\subsection{Problem Decomposition in a Hybrid Architecture}

Algorithm~\ref{alg:decompose} comprises our proposed procedure for using problem decomposition in a hybrid architecture. 
This algorithm takes a $problem$ of size $N$, the size of the non-CPU hardware $nonCPU\_size$, and the
maximum number of times to apply the decomposition method {\tt decomposition\_level} as input arguments. 
In this pseudocode, {\tt solve\_CPU(.)} and {\tt solve\_nonCPU(.)} denote subroutines that solve problems on classical
and non-CPU hardware, respectively. 

At the beginning, the algorithm checks whether a given $problem$ is ``well-suited'' for the non-CPU hardware. 
We define a ``well-suited" problem for a non-CPU hardware device as a problem that is expected to be solved faster on a 
non-CPU device compared to a CPU. This step is performed by the ``decision maker'' (explained in Sec.~\ref{Sec:DecMaker}). 

The algorithm then proceeds to decompose 
the problem only if the entire $problem$ is not well-suited for the non-CPU hardware. When a problem is sent to the 
{\tt do\_decompose(.)} method, it is broken into smaller-sized subproblems, and each subproblem is tagged with an upper bound, 
in the case of maximization, or a lower bound, in that of minimization. These bounds will be used later to 
reduce the number of calls to the non-CPU hardware. This decomposition step can be performed a single time, 
or iteratively up to $decomposition\_level$ times. 

After the original problem is decomposed, every new problem in the 
$subProblem$ list is checked by the decision maker. The well-suited problems are stored in $nonCPU\_subproblems$, and the 
rest are placed in the $CPU\_subproblems$ list.
After the full decomposition has bee achieved, the problems in $nonCPU\_subproblems$ are sent to 
the {\tt PruneAndPack(.)} subroutine. This subroutine ignores the problems with an upper bound 
(lower bound) less than (greater than) the best found solution by the {\tt solve\_CPU(.)} and {\tt solve\_nonCPU(.)} methods, and continues to pack in the
rest of the subproblems until the size of the non-CPU hardware has been maxed out. These are necessary steps for minimizing
the number of calls to the non-CPU hardware, and thus minimizing the communication time. 
At the final step, the results of all of the solved subproblems are combined and analyzed using the {\tt Aggregate(.)} subroutine.  

\begin{algorithm*}
	
	\caption{\small \, Problem Decomposition Algorithm}	\label{alg:decompose}
	\begin{algorithmic}[1]
	\Procedure{\tt Decompose}{$problem$, $nonCPU\_size$, $decomposition\_level$}
	\If{$problem$ is not well-suited }
	\State{\textbf{return} {\tt Solve\_CPU}($problem$)}
	\EndIf
	\If{$problem\_size < nonCPU\_size$}
	\State{\textbf{return} {\tt Solve\_nonCPU}($problem$)}
	\EndIf
	\State{$subproblems\leftarrow$ {\tt Do\_decompose}($problem$)}
	\State{$CPU\_subproblems\leftarrow \emptyset$}
	\State{$nonCPU\_subproblems\leftarrow \emptyset$}
	\For{$problem_i$ in $subproblems$}
	\If{$problem_i$ is not well-suited}
	\State{$CPU\_subproblems \leftarrow CPU\_subproblems\cup problem_i$}
	\Else
	\If{$problem\_size > nonCPU\_size$}
	\State{$level\leftarrow 0$}
	\State{$buffer\_list\leftarrow problem$}
	\While{$buffer\_list\_size\neq 0$ and $level<decomposition\_level$}
	\For{$p$ in $buffer\_list$}
	\State{$new\_buffer\_list\leftarrow$ {\tt Do\_decompose}($p$)}
	\EndFor
	\State{$buffer\_list \leftarrow\emptyset$}
	\For{$p$ in $new\_buffer\_list$}
	\If{$p$ is not well-suited}
	\State{$CPU\_subproblems\leftarrow CPU\_subproblems\cup p$}
	\Else
	\If{$p\_size > nonCPU\_size$}
	\State{$buffer\_list\leftarrow buffer\_list\cup p$}
	\Else
	\State{$nonCPU\_subproblems\leftarrow nonCPU\_subproblems \cup p$}
	\EndIf
	\EndIf
	\EndFor
	\State{$level \leftarrow level + 1$}
	\EndWhile
	\Else
	\State{$nonCPU\_subproblems\leftarrow nonCPU\_subproblems\cup problem_i$}
	\EndIf
	\EndIf
	\EndFor
	\For{$p$ in $CPU\_subproblems$}
	\State{$CPU\_result \leftarrow$ {\tt Solve\_CPU}($p$)}
	\EndFor
	\While{$nonCPU\_problems$ is not empty}
	\State{$packed\_problems \leftarrow$ {\tt PruneAndPack}($nonCPU\_problems$)}\label{line:pack}
	\State{$nonCPU\_result \leftarrow$ {\tt Solve\_nonCPU}($packed\_problems$)}
	\EndWhile
	\State{\textbf{return} {\tt Aggregate}($nonCPU\_result, cpu\_result$)}
	\EndProcedure
\end{algorithmic}
\end{algorithm*}

\subsection{Decision Maker} \label{Sec:DecMaker}
There is always an overhead cost in converting each subproblem into an acceptable format for the non-CPU hardware, sending the 
correctly formatted subproblems to this hardware, and finally receiving the answers. It is hence logical to send the subproblems to 
a ``decision maker'' before preparing them for the new hardware. In an ideal scenario, this decision maker will have access to a portfolio of 
classical algorithms, along with the specifications of the non-CPU hardware. Based on this information, the decision maker will be able to decide whether
a given problem is well-suited for the non-CPU hardware. These decisions may be achieved via either some simple characteristics of the problem, or 
through intelligent machine-learning models with good predictive power, depending on  the case at hand.

\section{Specific Case Study: Maximum Clique} \label{Sec:MaxCliqueDef}

Now that we have laid out the specifics of our proposal for problem decomposition in a hybrid architecture, we apply
this method to the maximum clique problem. We begin by explaining the graph theory notation and necessary definitions, along with
a few real-world applications for the maximum clique problem. 

A graph \mbox{$G = (V, E)$} consists of a finite set \mbox{$V = \{v_1, v_2, \ldots, v_n\}$} of vertices and a set $E\subseteq V\times V$ of edges. Two 
distinct vertices $v_i$ and $v_j$ are adjacent if $\{v_i, v_j\}\in E$. The \emph{neighbourhood} of a vertex $v$ is denoted by $\mathcal N(v)$, 
and is the subset of vertices of $G$ which are adjacent to $v$.  The degree of a vertex $v$ is the cardinality of $\mathcal N(v)$, and is 
denoted by $d(v)$. The maximum degree and minimum degree of a graph are denoted by $\Delta(G)$ and $\delta(G)$, respectively.

The \emph{subgraph} of $G$ \emph{induced} by a subset of vertices $U\subseteq V$ is denoted by $G[U]$ and consists of the vertex set $U$, 
and the edge set defined by 
\[
E(G[U]) = \{\{u_i, u_j\}~|~ u_i,~u_j \in U,~\{u_i, u_j\}\in E(G)\}.
\]

A \emph{complete subgraph}, or a \emph{clique}, of $G$ is a subgraph of $G$ where every pair of its vertices are adjacent. The size of a 
maximum clique in a graph $G$ is called the \emph{clique number of} $G$ and is denoted by $\omega(G)$. An independent set of $G$, on the 
other hand, is a set of pairwise nonadjacent vertices. As every clique of a graph is an independent set of the complement graph, 
one can find a maximum independent set of a graph by simply solving the maximum clique problem in its complement.

A node-weighted graph $G$ is a graph that is augmented with a set of positive weights $W = \{w_1, w_2, \ldots, w_n\}$ assigned to each node. The 
maximum weighted-clique problem is the task of finding a clique with the largest sum of weights on its nodes.


Many real-world applications have been proposed in the literature for the maximum clique and maximum independent set problems. 
One commonly suggested application is community detection for social network analysis \cite{CommDet}. Even though cliques
are known to be too restrictive for finding communities in a network, they prove to be useful in finding overlapping communities. 
Another example is the finding of the largest set of correlated/uncorrelated instruments in financial markets.
This problem can be readily modelled as a maximum clique problem, and it plays an important role in risk management 
and the design of diversified portfolios (see \cite{finance} and \cite{marketGraph}). 
Recent studies have shown some merit in using a weighted maximum-clique finder for drug discovery purposes 
(see \cite{bio} and \cite{graphSimilarity}). In these studies, the structures of molecules are stored as graphs, and the properties of 
unknown molecules are predicted by solving the maximum common subgraph problem using the graph representations of the molecules. 
Aside from the proposed industrial applications, the clique problem 
is one of the better-studied NP-hard problems, and there exist powerful heuristic and exact algorithms for solving the maximum clique 
problem in the literature (see, e.g.,~\cite{reviewjava}, \cite{heuristic}, and \cite{para}). It is, therefore, beneficial
to map a part of, or an entire, optimization problem into a clique problem and benefit from the runtime of these algorithms (see, e.g.,~\cite{mapApp}).

\section{Problem Decomposition for the Maximum Clique Problem} \label{Sec:ProbDecomMaxClique}
In this section, we explain the details of two problem decomposition methods for the maximum clique problem. 
The first approach is based on the branch-and-bound framework and is similar to what is dubbed ``vertex splitting'' in 
Ref.~\cite{newAnnealerClique}.  This method is briefly explained in Sec.~\ref{Sec:BnB}, followed by a discussion on
why it fails to meet the problem decomposition requirements of Sec.~\ref{Sec:ProbDecom}. We then present our own method in Sec.~\ref{Sec:k-core} and prove that it is an effective problem  decomposition method, 
that is,~it generates a polynomial number of subproblems and requires polynomial computational complexity to generate
each subproblem.

\subsection{Branch and Bound}\label{Sec:BnB}
The branch-and-bound technique (BnB) is a commonly used method in exact algorithms for solving the maximum clique problem
(see Ref.~\cite{reviewjava} for a comprehensive review on the subject).  At a very high level, BnB consists of three main procedures 
that are repeatedly applied to a subgraph of the entire graph until the size of the maximum clique is found.  
The main procedures of a BnB approach consist of: (a) ordering the vertices in a given subproblem and adding the highest-priority vertex to the
solution list; (b) finding the space of feasible solutions based on the vertices in the solution list; and (c) assigning upper bounds to each subproblem. 
\begin{figure*}
	\centering
		\includegraphics[scale=0.88]{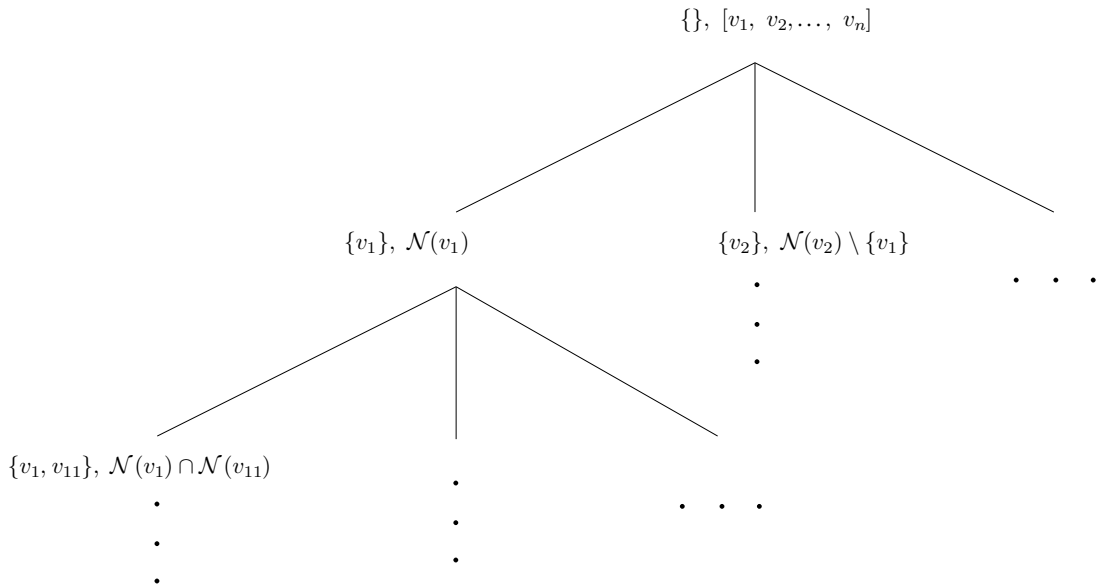}
	\caption{Abstract illustration of a branch-and-bound search tree}
	\label{fig:bnb}
\end{figure*}
Fig. \ref{fig:bnb} shows a schematic representation of the steps involved in traversing the BnB search tree. In the first step, all of the vertices of the graph are listed
at the root of the tree, representing the space of feasible solutions, along with an empty set that will contain 
possible solutions as the algorithm traverses the search tree (we will call this set ``\textit{growing-clique}'').
The vertices inside the feasible space are ordered based on some criteria (e.g.,~increasing/decreasing degree, or the sum of the degree of the 
neighbours of a vertex \cite{TomitaMCR}), 
and the highest-priority vertex ($v_{\rm 1}$) is chosen as the ``branching node". The branching node is added
to \textit{growing-clique} and the neighbourhood of this node ($\mathcal N(v_{\rm 1})$) is chosen as the new space of feasible solutions. 
This procedure continues until the domain of feasible solutions becomes an empty list, indicating that \mbox{\textit{growing-clique}} now contains a
maximal clique. If the size of this clique is larger than the best existing solution, the best solution is updated. The number of nodes in the 
BnB tree is greatly reduced by applying some upper bounds based on graph colouring \cite{TomitaSeki} or Max-SAT reasoning \cite{maxSAT}. These
upper bounds prune the tree if the upper bound on the size of the clique inside a feasible space is smaller than the best found solution (minus the size of  
\textit{growing-clique}).

In a fully classical approach, the entire BnB search tree is explored via a classical computer. On the other hand, 
some of the work can be offloaded to the non-conventional hardware in the hybrid scenario.
More precisely, one can stop traversing a particular branch of the search tree when the size of the subproblem under consideration  
becomes smaller than the capacity of the non-conventional hardware (see, e.g.,~\cite{newAnnealerClique}). Although this idea 
can combine the two hardware devices in an elegant and coherent way, it suffers from two main drawbacks. It creates an exponential number
of subproblems (see Fig.~$6$ in Ref.~\cite{newAnnealerClique}), and, in the worst case, it takes an exponential amount of time to traverse the search 
tree until the size of the subproblem becomes smaller than the capacity of the non-conventional hardware. 

\subsection{A Proposed Method for Problem Decomposition}\label{Sec:k-core}

\begin{algorithm}
	\caption{\small \, $k$-Core Computation and Degeneracy Ordering}	\label{alg:kCorePatent}
	\begin{algorithmic}[1]
\Procedure{\tt K-Core\_Compute}{undirected weighted graph $G = (V, E)$}
		\State $sorted\_nodes$ $\leftarrow$ an empty list
		\For{ every $v \in G$}
		\State{$K(v) \leftarrow d(v)$}
		\EndFor
		\State $V'\leftarrow$ order vertex set $V$ by non-decreasing vertex degree
		\While{$V'$ is not empty}
		\State select a vertex $v$ with minimum $k$-core, and find its neighbourhood $\mathcal N(v)$
		\For{ every $u\in\mathcal N(v)$}
		\If{$K(u) > K(v)$}
		\State{$K(u) \leftarrow K(u) - 1$}
		\EndIf
		\EndFor
		\State{$sorted\_nodes \leftarrow  sorted\_nodes \cup  \{v\}$}
		\State{$V'\leftarrow V'\setminus \{v\}$}
		\EndWhile
		
		\State \textbf{return} $sorted\_nodes$, $k$-$core$
		\EndProcedure
\end{algorithmic}
\end{algorithm}

In this section, we explain our proposed problem decomposition method, which is much more effective than BnB (explained in the previous section). 
We show, in particular, that our proposed method generates a much smaller number of subproblems
compared to BnB, and that these subproblems can be obtained via an efficient $\mathcal O(E)$ algorithm.

Our method begins by sorting the vertices of the graph based on their $k$-core number. Ref.~\cite{kCore} details the formal $k$-core definition,
and proposes an efficient $\mathcal O(E)$ algorithm for calculating the $k$-core number of the vertices of a graph. Intuitively, the $k$-core number of 
a vertex $v$ is equal to $k$ if it has at least $k$ neighbours of a degree higher than or equal to $k$, \textit{and} not more than $k$ neighbours of a degree
higher than or equal to $k+1$. We denote the core number of a vertex $v$ by $K(v)$. 

The core number of a graph $G$, denoted by $K(G)$, is the highest-order core of its vertices. $K(G)$ is always upper bounded by the maximum degree
of the vertices of the graph $\Delta(G)$, and the minimum core number of the vertices is always equal to the minimum degree $\delta(G)$.
A \emph{degeneracy}, or \emph{k-core ordering}, of the vertices of a graph $G$ is a non-decreasing ordering of the vertices of $G$ based on their core numbers. Algorithm~\ref{alg:kCorePatent} is a method for finding the $k$-core ordering of the vertices along with their $k$-core numbers. 
The following proposition shows that, given a degeneracy ordering for the vertices of the graph, one can decompose the maximum clique problem into a linear number of subproblems. Our proposed method is based on this proposition. \newline
\begin{prop}\label{prop:oracleCounts}
For a graph $G$ of size $n$, one can decompose the maximum clique problem in $G$ into at most $n - K(G) + 1$ subproblems, each of which is upper-bounded in size by $K(G)$.
\end{prop}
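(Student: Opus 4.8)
The plan is to build the decomposition around a degeneracy (equivalently, $k$-core) ordering of the vertices, as produced by Algorithm~\ref{alg:kCorePatent}. Write the ordering as $v_1, v_2, \ldots, v_n$, non-decreasing in core number. The one structural fact I will use is that this is a genuine degeneracy ordering: for every $i$, the vertex $v_i$ has at most $K(G)$ neighbours among $v_{i+1}, \ldots, v_n$. This holds because the degeneracy of $G$ coincides with its core number $K(G)$, and it is precisely the invariant maintained by the peeling algorithm of \cite{kCore}. Denote by $L_i := \mathcal N(v_i) \cap \{v_{i+1}, \ldots, v_n\}$ the \emph{forward neighbourhood} of $v_i$, so that $|L_i| \le K(G)$ for every $i$.

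I would then define $n - K(G) + 1$ subproblems as follows. For each $i \in \{1, 2, \ldots, n - K(G)\}$, let $S_i$ be the maximum clique problem on the induced subgraph $G[L_i]$; since $|L_i| \le K(G)$, each such $S_i$ has size at most $K(G)$. For the remaining vertices, let $S_0$ be the maximum clique problem on $G\big[\{v_{n-K(G)+1}, \ldots, v_n\}\big]$, an induced subgraph on exactly $K(G)$ vertices. The solutions are recombined by
$$\omega(G) \;=\; \max\Big(\, \omega\big(G[\{v_{n-K(G)+1}, \ldots, v_n\}]\big),\; \max_{1 \le i \le n-K(G)} \big(1 + \omega(G[L_i])\big) \Big),$$
where the ``$+1$'' accounts for re-inserting the seed vertex $v_i$, which by definition of $L_i$ is adjacent to every vertex of $G[L_i]$; in the node-weighted setting one re-inserts $w(v_i)$ in place of $1$.

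It then remains to justify this recombination identity. Let $C$ be any clique of $G$, and let $v_i$ be its vertex of smallest index in the ordering. Since $C$ is a clique, every other vertex of $C$ lies in $\{v_{i+1}, \ldots, v_n\}$ and is adjacent to $v_i$, so $C \setminus \{v_i\} \subseteq L_i$ and is a clique of $G[L_i]$. If $i \le n - K(G)$, then $S_i$ contains a clique of size $|C| - 1$, and re-inserting $v_i$ yields $|C|$; if instead $i > n - K(G)$, then $C$ lies entirely inside the vertex set of $S_0$. Applying this to a maximum clique of $G$ shows the right-hand side is at least $\omega(G)$, while the reverse inequality is immediate because each $S_i$ and $S_0$ only report cliques that are genuinely cliques of $G$. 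The claimed bounds — at most $n - K(G) + 1$ subproblems, each of size at most $K(G)$ — are built into the construction.

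The one genuinely delicate point, and the place where I expect the real work to be, is establishing that the ordering returned by Algorithm~\ref{alg:kCorePatent} has forward-degree bounded by $K(G)$ rather than merely by the maximum degree $\Delta(G)$; this is exactly the content that distinguishes a true $k$-core ordering from a naive ``sort by core number'', and I would invoke \cite{kCore} for it. Everything else is bookkeeping, the only mild cleverness being the improvement from the naive count of $n + 1$ (one subproblem per vertex) down to $n - K(G) + 1$, obtained by folding the last $K(G)$ vertices of the ordering — whose forward neighbourhoods automatically have size less than $K(G)$ — into the single catch-all subproblem $S_0$.
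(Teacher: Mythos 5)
Your proposal is correct and follows essentially the same route as the paper: a degeneracy ordering with forward-degree bounded by $K(G)$, one subproblem per forward neighbourhood of the first $n-K(G)$ vertices, and the last $K(G)$ vertices folded into a single block, giving $n-K(G)+1$ subproblems of size at most $K(G)$. Your explicit recombination identity and the lowest-index-vertex argument simply spell out what the paper leaves implicit, and the forward-degree bound you flag as delicate is exactly what the paper extracts from lines 8--11 of Algorithm~\ref{alg:kCorePatent}.
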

\begin{proof}
Let $d(v)$ be the degree of vertex $v$ in $G$. From Algorithm \ref{alg:kCorePatent} (lines $8$--$11$), the number of vertices that are adjacent to $v$ and precede vertex $v$ in $sorted\_nodes$ is greater than or equal to $d(v) - K(v)$. Therefore, the number of vertices that appear after $v$ in this ordering is 
upper-bounded by $K(v)$. Using this fact, the algorithm starts from the last $K(G)$  vertices of $sorted\_node$, and solves the maximum clique on that 
induced subgraph. It then moves towards the beginning of $sorted\_node$ vertex by vertex. Each time it takes a root vertex $w$ and forms a new subproblem 
by finding the adjacent vertices that are listed after $w$ in  $sorted\_node$. The size of these subproblems is upper-bounded by $K(w)$, which itself is 
upper-bounded by $K(G)$, and the number of the subproblems created in this way is exactly $n - K(G) + 1$. 

Since we have 
\begin{align*}
\omega(G) \leq K(G) + 1 \leq \Delta(G) + 1,
\end{align*}
one can  stop the procedure as soon as the size of the clique becomes larger than or equal to the $k$-core number of a root vertex.
\end{proof}

To illustrate, consider the 6-cycle with a chord shown in Fig.~\ref{fig:C6}. In the first step, the vertices are 
ordered based on their core numbers, according to Algorithm \ref{alg:kCorePatent}: $sorted\_nodes$ = [c, b, e, f, d, a].

\begin{figure}[H]
	\centering
		\includegraphics[scale=0.7]{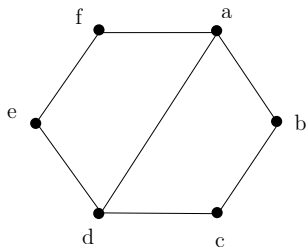}
	\caption{A 6-cycle graph with a chord}
	\label{fig:C6}
\end{figure}

Following our proposed algorithm,  we first consider the subgraph induced by the last $K(G) = 2$ vertices from the list, that is, $[a,d]$. 
Solving the maximum clique problem in this subgraph results in a lower bound on the size of the maximum clique, that is,~$\omega(G) \geq 2$.
After this step, we proceed by considering the vertices one by one from the end of  $sorted\_nodes$ to form the subproblems that follow:
\begin{align*}
root\_node \quad     &  subproblem \\
 f   \qquad  \quad \,  &   \quad  \quad    \{\}  \\
 e  \qquad  \quad \,  &   \quad \,\,    \{f,d\} \\ 
 b \qquad   \quad \,  &   \quad  \,\,\,\,\,    \{a\}  \\
 c \qquad   \quad \,  &   \quad \,\,    \{b,d\}    \,.
\end{align*}
Among these subproblems, only the two with \mbox{$root\_node = \{e,c\}$} need to be examined by the clique solver, since the size of the 
other subproblems is less than or equal to the size of the largest clique found. This example shows how a problem of size six can
be broken down to three problems of size two.

As a final note, the $k$-core decomposition takes $\mathcal O(E)$ time, and constructing the resulting subproblems takes
$\mathcal O(N^2)$. The entire process takes \mbox{$\mathcal O(E + N^2)$} time at the first level, and $\mathcal O(N^{(\ell +1)})$
time at \mbox{$decomposition\_level = \ell$}. The maximum number of subproblems can grow up to $N^\ell$ at \mbox{$decomposition\_level = \ell$}.

\section{Results} \label{Sec:ResDis}
In this section, we discuss our numerical results for different scenarios in terms of density and the size of the underlying graph. 
In particular, we study the effect of the graph core number, $K(G)$, and the density of the graph on the number of generated 
subproblems. In the fully classical approach, we also compare the running time of 
our proposed algorithm with  state-of-the-art methods for solving the maximum clique problem in the large, sparse graphs.  
It is worth noting that $k$-core decomposition 
is widely used in exact maximum clique solvers as a means to find computationally inexpensive and relatively tight upper bounds in large, sparse graphs. 
However, to the best of our knowledge, no one has used $k$-core decomposition as a method of problem decomposition as is proposed 
in this paper (e.g.,~Ref.~\cite{newAnnealerClique} uses it for pruning purposes and BnB for decomposition).  

\subsection{Large and Sparse Graphs}
The importance of Proposition~\ref{prop:oracleCounts} is more pronounced when we consider the standard large, sparse graphs 
listed in Table~\ref{table:results}. For each of these graphs, we first perform one round of $k$-core decomposition, and then solve the 
generated subproblems with our own exact maximum clique solver. It is worth mentioning that, after decomposing the original 
problem into sufficiently smaller subproblems, our approach for finding the maximum clique of the smaller subproblems is similar 
to what has been proposed in Ref.~\cite{pmc}.

\begin{table*}
\centering
\scriptsize
\caption{The number of subproblems and the runtime results (in seconds) for exact maximum clique solvers (1QBit's solver, PMC, and BBMCSP) on network graphs from the \href{https://snap.stanford.edu/data/}{Stanford Large Network Dataset Collection} \cite{Stanford} and \href{http://networkrepository.com/}{Network Repository} \cite{netrepo}. ``MaxClique" refers to the size of the maximum clique. All three codes are implemented in C++ and run on an Ubuntu machine with 32 GB of memory and an Intel 3.60 GHz processor.}
\label{table:results}
\begin{tabular}{ l r r r r | r r r | r}
\noalign{\doubleline}
 & & & & & &Runtime (s) & &\cr
Graph Name & Num. of Vertices & Num. of Edges & $K(G)$ & MaxClique & 1QBit Solver & PMC & BBMCSP & Num. of Subprobs. \cr
\noalign{\vskip 3pt\hrule\vskip 3pt} 
\textbf{Stanford Large Network} & & & & & & & &\cr
\textbf{Dataset:} &  & & & & & & &\cr
\noalign{\vskip 3pt\hrule\vskip 4pt} 
ego-Facebook & $4,039$         & $88,234$ & $115$  & $69$ & $\textbf{0.009}$ & $0.04$ & $0.03$ & $367$\cr
ca-CondMat & $23,133$ & $93,468$  & $25$  & $26$ & $\textbf{0.004}$ & $0.03$& $0.02$ & $3$\cr
email-Enron & $36,692$  &   $25,985$& $43$ & $20$ & $\textbf{0.01}$ & $0.3$ & $0.06$ & $2235$\cr
com-Amazon & $334,863$       & $925,872$& $6$ & $7$ & $\textbf{0.06}$ & $0.06$ & $0.2$ & $3$\cr
roadNet-PA & $1,088,092$ & $1,541,898$& $3$ & $4$ & $\textbf{0.1}$ & $2.1$ & $0.4$ & $580$ \cr
com-Youtube  & $1,134,890$       & $2,987,624$& $51$ & $17$ & $\textbf{0.5}$ & $2.0$ & $2.5$ & $24466$ \cr
as-skitter  & $1,696,415$       & $11,095,298$& $111$ & $67$ & $\textbf{0.7}$ & $1.4$ & $5.6$ & $4087$ \cr
roadNet-CA & $1,965,206$       & $2,766,607$& $3$ & $4$ & $\textbf{0.2}$ & $3.9$ & $0.4$ &  $2286$\cr
com-Orkut &                        $3,072,441$ & $117,185,083$& $253$ & $51$ & $\textbf{82}$ & $179$ & $220$ & $741349$\cr
com-LiveJournal & $3,997,962$ & $34,681,189$& $360$ & $327$ & $\textbf{2.1}$ & $2.5$ & $20$ & $25$\cr 
\noalign{\vskip 3pt\hrule\vskip 3pt} 
\textbf{Network Repository Graphs}: & & &  & & & & \cr
\noalign{\vskip 3pt\hrule\vskip 4pt} 
soc-buzznet & $101,163$ & $2,763,066$& $153$ & $31$ & $\textbf{1.9}$ & $14.6$ & $4.0$ & $29484$ \cr
soc-catster & $149,700$ & $5,448,197$& $419$ & $81$ & $\textbf{1.1}$ & $>1 \mathrm{\, h}$ & $5.5$ & $12095$ \cr
delaunay-n20 & $1,048,576$ & $3,145,686$& $4$ & $4$ & $\textbf{1.4}$ & $2.5$ & $2.8$ & $1036595$ \cr
web-wikipedia-growth & $1,870,709$ & $36,532,531$& $206$ & $31$ & $\textbf{18}$ & $397$ & file not supported & $358272$ \cr
delaunay-n21 & $2,097,152$ & $6,291,408$& $4$ & $4$ & $\textbf{2.8}$ & $5.1$ & $\textbf{2.8}$ & $2073021$ \cr
tech-ip & $2,250,498$ & $21,643,497$& $253$ & $4$ & $\textbf{28}$ & $1031$ & $220$ & $222338$ \cr
soc-orkut-dir & $3,072,441$ & $117,185,083$& $253$ & $51$ & $\textbf{74}$ & $188$ & $170$ & $741349$ \cr 
socfb-A-anon & $3,097,165$ & $23,667,394$& $74$ & $25$ & $\textbf{10}$ & $18$ & $30$ & $357836$\cr
soc-livejournal-user-groups & $7,489,073$ & $112,305,407$& $116$ & $9$ & $\textbf{103}$ & $>1 \mathrm{\, h}$ & $1600$ & $2404573$ \cr
aff-orkut-user2groups & $8,730,857$ & $327,036,486$& $471$ & $6$ & $\textbf{852}$ & $>1 \mathrm{\, h}$ & $2400$ & $4173108$ \cr
soc-sinaweibo & $58,655,849$ & $261,321,033$& $193$ & $44$ & $\textbf{93}$ &  $>1 \mathrm{\, h}$ & $1070$ & $713652$ \cr
\noalign{\vskip 3pt\hrule\vskip 4pt} 
\end{tabular} 
\end{table*} 

In the large and sparse regime, the core numbers of the graphs are typically orders of magnitude smaller than the number of vertices
in the graph. 
This implies that non-CPU hardware of a size substantially smaller than the size of the original problem 
can be used to find the maximum clique of these massive graphs.
For example, the \emph{com-Amazon} graph, with 334,863 vertices and 925,872 edges 
has $K(G) = 6$. These facts, combined with Proposition~\ref{prop:oracleCounts}, imply that this graph can be decomposed into $\sim$~334,863
problems of a size $\leq 6$. However, as the table shows, the actual number of subproblems that should be solved is only three, 
since the $k$-core number of the next subproblem drops to a number less than or equal to the size of the largest clique that was found. 
Hence, in this specific case, a single call to a non-CPU hardware device of size greater than $21$ can solve the entire problem,
that is,~this problem can be solved by submitting an effective problem of size 21 to the D-Wave 2000Q chip.

Numerical results also indicate that the fully classical runtime of our proposed method is considerably faster than two of the 
best-known algorithms in the literature for large--sparse graphs, namely PMC \cite{pmc} and BBMCSP~\cite{bbmcsp}. Table~\ref{table:results} 
shows that in some instances, our method is orders of magnitude faster than these algorithms.  

\subsection{Hierarchy of minimum degree, maximum degree, max $k$-core, and clique number}
The $k$-core decomposition proves extremely powerful in the large and sparse regime because it dramatically reduces the size of the
problem, and also prunes a good number of the subproblems.
This situation changes as we move towards denser and denser graphs. As  graphs
increase in density, $K(G)$ approaches the graph size, and the size reduction becomes less effective in a single iteration of decomposition. 
It is hence necessary to apply the decomposition method for at least a few iterations in the dense regime.

\begin{figure}
	\centering
		\includegraphics[scale=0.66]{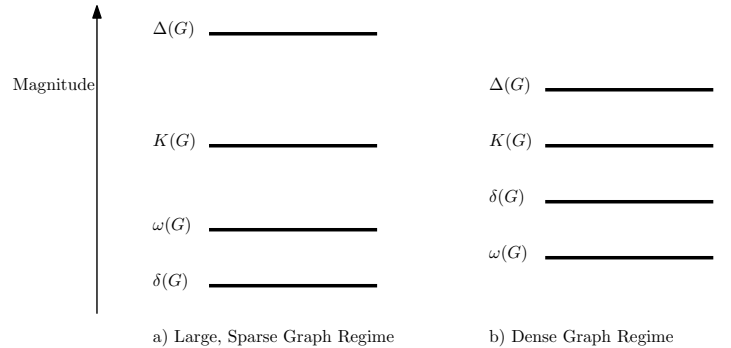}
	\caption{$\omega(G)$, $K(G)$, $\Delta(G)$, and $\delta(G)$ comparisons in different graph density regimes}
	\label{fig:regimes}
\end{figure}

Moreover, the number of subproblems that should be solved also grows as the graphs increase in density. 
This is partially due to there being more rounds of decomposition, and partially to the hierarchy of the clique number $\omega(G)$ and
the maximum and minimum core numbers (shown in Fig.~\ref{fig:regimes}).
In the sparse regime, the clique number $\omega(G)$ lies between the minimum core number $\delta(G)$ 
and the core number of the graph $K(G)$. This  means that all of the subproblems that stem from a root node with a core number 
less than the clique number can be pruned. This phenomenon leads to the effective pruning that is reflected in the number of subproblems
listed in Table \ref{table:results}. On the other hand, 
as shown in Fig.~\ref{fig:regimes}, the minimum core of the graph, i.e,~$\delta(G)$ is larger than the 
clique number in the dense regime. Therefore, core numbers are no longer suitable for upper-bounding purposes, and some other upper-bounding methods
should be used, as we discuss in the next section.

\subsection{Dense Graphs}
As explained in the previous section, the $k$-core number becomes an ineffective upper bound in the case of dense graphs; therefore,
the number of subproblems that should be solved grows as $N^\ell$ for $\ell$ levels of decomposition. Because of this issue, 
and since colouring is an effective upper bound in the dense regime, we used the heuristic DSATUR	 algorithm explained in 
Ref.~\cite{lewis2015guide} to find the colour numbers of each generated subproblem. We then ignored the subproblems with a colour number 
less than the size of the largest clique that has been found from the previous set of subproblems. This technique reduces the number of subproblems by a large factor, 
as can be seen in Table~\ref{table:results2}.
In this table, we present the results for random Erd\H os--R\'enyi graphs of three different sizes and varying densities. For each size and density,
we generated $10$ samples, and decomposed the problems iteratively three levels ($decomposition\_level = 3$). The reported results are
the average of the $10$ samples for each category. ``max", ``min", and ``avg" refer to the maximum, minimum, and average size of the 
generated subproblems at every level. 

\begin{table*}
\centering
\scriptsize
\caption{Problem decomposition statistics for the maximum clique of random 
Erd\H os--R\'enyi graphs of different sizes and densities. 
The first number in parentheses represents size, and the second number is the 
edge presence probability (which ends up being the graph density to a desired level of precision).}
\label{table:results2}
\resizebox{\textwidth}{!}{
\begin{tabular}{ l r r |r r r r |r r r r	|r r r r}
\noalign{\doubleline}
& & & &Level 1  & & & & Level 2	& & & & Level 3\cr
Graph Name & $K(G)$ & MaxClique & Num. of Subprobs. & max & min & avg & Num. of Subprobs. & max & min & avg	& Num. of Subprobs. & max & min & avg\cr
\noalign{\vskip 3pt\hrule\vskip 3pt} 
 
ER(200,0.3) & $47.7$   & $7.2$  & $153.3$  & $48$  & $16$ 	& $36$ 	& $74$ 		& $13$ 	&$8$ 	&$10$ 	& 14.1 		&5     	&2            &5\cr
ER(200,0.4) & $66.2$   & $9$    & $134.8$  & $66$  & $27$ 	& $52$	& $460$    	& $26$ 	&$12$      &$19$ 	& 5.4  		&9 	        &6            &8\cr
ER(200,0.5) & $85.3$   & $11$   & $115.7$  & $85$  & $44$ 	& $69$ 	& $2941.2$ 	& $42$ 	&$17$      &$29$ 	& 418.7 		&21    	&12          &16\cr
ER(200,0.6) & $104.2$  & $13.9$ & $96.8$   & $104$ & $62$ 	& $88$ 	& $3856.6$ 	& $60$      &$26$      &$43$ 	& 6685.3 	        &37   	&18          &26\cr
ER(500,0.3) & $127.8$  & $8.5$  & $373.2$  & $128$ & $38$ 	& $93$ 	& $8364.4$ 	& $37$ 	&$12$      &$25$ 	& 217.2 		&10   	&6            &7\cr
ER(500,0.4) & $174.8$  & $10.3$ & $326.2$  & $175$ & $71$ 	& $133$ 	& $23813.6$     & $67$ 	&$18$      &$42$ 	& 11050.9 	&28   	&11          &18\cr
ER(500,0.5) & $223.1$  & $13$   & $277.9$  & $223$ & $113$ 	& $177$ 	& $27437.3$     & $107$ 	&$29$      &$67$ 	& 359735 	         &54  	&17          &33\cr
ER(1000,0.1)& $80.8$   & $5.4$  & $920.2$  & $81$  & $7$ 	& $54$ 	& $985.1$ 	& $7$ 	&$4$ 	&$7$ 	& 658.9 		&5 	        &4            &4\cr
ER(1000,0.2)& $171.9$  & $7.2$  & $829.1$  & $172$ & $33$ 	& $116$ 	& $15666.3$     & $33$ 	&$8$ 	&$22$ 	& 361 		&8 	        &6            &6\cr
ER(1000,0.3)& $265.6$  & $9.1$  & $735.4$  & $266$ & $78$ 	& $188$ 	& $82170.8$     & $76$ 	&$15$      &$44$ 	& 13515.9 	&24    	&8            &15\cr 

\noalign{\vskip 3pt\hrule\vskip 4pt} 
\end{tabular}}
\end{table*}

Notice the significant difference in the $\frac{K(G)}{ \rm{graph \, size}}$ ratio between the sparse graphs
presented in  Table~\ref{table:results} and the relatively dense graphs presented here. Unlike in the sparse regime, the gain in size reduction 
after one level of decomposition is only a factor of few. 
This means that a graph of size $N$ is decomposed into $\sim N$ graphs of smaller but relatively similar size after one level of decomposition. 
This fact hints towards using more levels of decomposition, as with more decomposition, the maximum size of the subproblems decreases.
However, there is usually a tradeoff between the number of generated subproblems and the maximum size of these subproblems, as can be seen
in Table \ref{table:results2}.
It is, therefore, not economical to use problem decomposition for these types of 
graphs in a fully classical approach for solving the maximum clique problem in dense graphs. However, if a specialized non-CPU hardware device becomes significantly faster 
than the performance of CPUs on the original problem, 
this problem decomposition approach will become useful. In fact,
the merit of this approach compared to the BnB-based decomposition, shown in Fig. 6 in Ref.~\cite{newAnnealerClique}, is that it generates 
considerably fewer of subproblems, and that the time for constructing the subproblems is polynomial. 

Fig.~\ref{fig:densityScaling} shows the scaling of the number of subproblems with the density of the graph. For this plot, we assume two
 devices of size $\{45,65\}$, representing an instance of the  D-Wave 2X chip, and the theoretical upper bound on the maximum size of a complete graph embeddable into the new D-Wave 2000Q chip.
The graph size is fixed to 500 in every case, and the points are the average of 10 samples, with error bars showing standard deviation.
For each point, we first run a heuristic on the whole graph, and then prune the subproblems based on their
colour numbers obtained using the DSATUR algorithm of Ref.~\cite{lewis2015guide}. Densities below 0.2 are shaded with a grey band, since the number 
of subproblems for $\{0.05,0.1,0.15\}$ densities is zero. This happens because all of the subproblems
are pruned after the second round of decomposition.
Aside from scaling with respect to density, this plot also shows that a small increase to the size of the non-CPU hardware (e.g.,~from 45 nodes to 65), 
will not have a significant effect on the total number of generated subproblems. In these scenarios, the non-CPU hardware becomes competitive with
classical CPUs only if, in comparison to CPUs, it can either solve a single problem with very high quality and speed, or it can be mass produced and parallelized 
at lower costs.

\begin{figure}[h]
	\centering
		\includegraphics[scale=0.35]{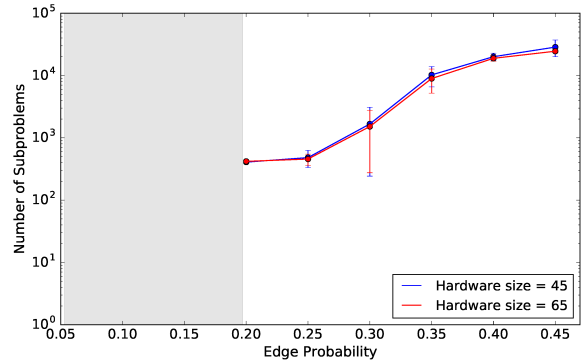}
	\caption{Scaling of the number of subproblems with density. Graph size is fixed to 500 in every case, and every point 
	is the average of 10 samples, with error bars representing sample standard deviation.}
	\label{fig:densityScaling}
\end{figure}

\section{Discussion} \label{Sec:Dis}
We focused on the specific 
case of the maximum clique problem and proposed a method of decomposition for this problem. 
Our proposed decomposition technique is based on the $k$-core decomposition of the input graph. 
The approach is  motivated mostly by the emergence of non-CPU hardware for solving hard problems. This approach
is meant to extend the capabilities of this new hardware for finding the maximum clique of large graphs.
While the size of generated subproblems
is greatly reduced in the case of sparse graphs after a single level of decomposition, an effective size reduction happens only after
multiple levels of decomposition in the dense regime. Compared to the branch-and-bound method, this method generates considerably fewer subproblems, \textit{and} creates these subproblems in polynomial time.

We believe that further research on finding tighter upper bounds on the size of the maximum clique in 
each subproblem would be extremely useful. Tighter upper bounds make it possible to attain more levels of decomposition, and hence 
reduce the problem size, without generating too many subproblems. 

In the fully classical approach, there is a chance that combining integer programming solvers for the maximum clique problem with our
proposed method can lead to better runtimes for dense graphs, or for large, sparse graphs with highly dense $k$-cores. This suggestion 
is based on from the fact that integer programming solvers such as CPLEX become highly competitive for graphs of moderate size, that is, 
between 200 and 2000, and high density, that is, higher than 90\% (e.g.,~see Table~1 in Ref.~\cite{wMaxCliqueExact}).
Since \mbox{$k$-core} decomposition tends to generate relatively high-density and small-sized subgraphs, the combination of the two 
we consider to be a promising avenue for future study.

\section*{Acknowledgement}
The authors would like to thank Marko Bucyk for editing
the manuscript, and Michael Friedlander, Maliheh Aramon, Sourav Mukherjee, 
Natalie Mullin, Jaspreet Oberoi, and Brad Woods for useful comments and discussion.
This work was supported by 1QBit.

\Urlmuskip=0mu plus 2mu\relax
\bibliographystyle{IEEEtran}
\bibliography{bib}

\end{document}